\documentclass[11pt,letterpaper]{article}
\usepackage[margin=1in]{geometry}
\usepackage[utf8]{inputenc} 
\usepackage[T1]{fontenc}    
\usepackage{hyperref}
\usepackage{algorithm}
\usepackage{algpseudocode}
\usepackage{amsmath}

\usepackage{hyperref,enumitem}
\hypersetup{
	colorlinks=true,
	citecolor=blue       
}
\usepackage{natbib}
\usepackage{url}            
\usepackage{graphicx}
\usepackage{amsmath,amssymb,amsthm}
\usepackage{thmtools,thm-restate}
\usepackage{booktabs} 
\usepackage{color}
\usepackage[font=small,labelfont=bf]{caption}
\usepackage{subcaption}
\usepackage{multirow}
\usepackage{authblk}
\renewcommand\Affilfont{\small}
\usepackage{comment}

\usepackage{mathtools}
\usepackage{wrapfig}
\usepackage{thm-restate}

\usepackage{tikz}
\usetikzlibrary{arrows}
\usetikzlibrary{shapes}
\usetikzlibrary{calc}
\usetikzlibrary{decorations.pathreplacing}

\theoremstyle{definition}

\DeclareMathOperator*{\argmax}{arg\,max}

\DeclareMathOperator{\E}{\mathbb{E}}

\usepackage[capitalize,noabbrev]{cleveref}

\newcount\Comments 
\Comments=1
\newcommand{\kibitz}[2]{\ifnum\Comments=1{\color{#1}{#2}}\fi}

\makeatletter
\renewcommand\AB@affilsepx{\qquad\qquad \protect\Affilfont}
\makeatother

\begin{document}
\title{Deep Reinforcement Learning for Sequential Combinatorial Auctions}

\author[a]{Sai S.~Ravindranath\footnote{Work done while Sai Srivatsa Ravindranath was a student researcher at Google Research and David C. Parkes was on a sabbatical at Google DeepMind. Correspondence to \texttt{saisr@g.harvard.edu}}}
\author[b]{\; Zhe Feng}
\author[b]{\; Di Wang}
\author[c]{\; Manzil Zaheer}
\author[b]{\\Aranyak Mehta}
\author[a]{David C.~Parkes}

\affil[a]{Harvard University}
\affil[b]{Google Research }
\affil[c]{Google DeepMind }

\maketitle

\begin{abstract}
Revenue-optimal auction design is a challenging problem with significant theoretical and practical implications. Sequential auction mechanisms, known for their simplicity and strong strategyproofness guarantees, are often limited by theoretical results that are largely existential, except for certain restrictive settings. Although traditional reinforcement learning methods such as Proximal Policy Optimization (PPO) and Soft Actor-Critic (SAC) are applicable in this domain, they struggle with computational demands and convergence issues when dealing with large and continuous action spaces. In light of this and recognizing that we can model transitions differentiable for our settings, we propose using a new reinforcement learning framework tailored for sequential combinatorial auctions that leverages first-order gradients.  Our extensive evaluations show that our approach achieves significant improvement in revenue over both analytical baselines and standard reinforcement learning algorithms. Furthermore, we scale our approach to scenarios involving up to 50 agents and 50 items, demonstrating its applicability in complex, real-world auction settings. As such, this work advances the computational tools available for auction design and contributes to bridging the gap between theoretical results and practical implementations in sequential auction design.
\end{abstract}

\section{Introduction}\label{sec:intro}

The effective allocation of scarce resources is a pervasive challenge across diverse domains, spanning spectrum licensing~\citep{spectrum}, transportation infrastructure~\citep{CAforAirport}, online advertising~\citep{Varian2014}, and resource management~\citep{Tan2020resource}. Combinatorial auctions (CAs) are a
 pivotal tool in addressing this challenge, offering a specialized auction format where bidders express valuations for combinations of items (or bundles). This allows for the incorporation of interdependencies among items, ultimately leading to more efficient allocations. For example, in spectrum auctions, bidders articulate preferences for combinations of licenses, capturing synergies and complementarities. However, despite their potential, CAs are known for their significant complexity, including computationally intensive winner determination problems and susceptibility to strategic bidding behavior~\citep{CA2003}.

{\em Sequential Combinatorial Auctions} (SCAs)  make use of a sequential interaction with bidders --- 
participants enter the auction in a predetermined order, strategically placing bids on available bundles that align with their interests.  The sequential nature of SCAs yields several advantages. Primarily, it alleviates complexity by breaking down the problem into smaller, more manageable subproblems. Additionally, SCAs can be implemented as straightforward mechanisms with obvious strategyproofness guarantees~\citep{li2017obviously}. The predetermined order of bidder arrivals plays a crucial role in ensuring no incentive for misreporting preferences, particularly when each stage's auction mechanism is designed to be strategyproof. Maintaining strategyproofness at each stage involves presenting bidders with a menu of options, allowing them to select their utility-maximizing choice. Such menu-based mechanisms enhance transparency and interpretability, simplifying the decision complexity for bidders who may not be experts in mechanism design.

Beyond the theoretical and practical advantages mentioned earlier, SCAs exhibit a surprising robustness in terms of generality. \citet{cai2017simple} has demonstrated that there exists a straightforward SCA capable of achieving a constant approximation to optimal auctions when bidders' valuations are XOS and $O(\log m)$-approximation in sub-additive setting.  Despite this promising result, finding an optimal SCA mechanism remains an intricate task, primarily due to the vastness of its search space. Existing results, such as those by \citet{cai2022simple}, primarily focus on constructing a simple mechanism that achieves constant-approximation in a restricted XOS setting. In contrast, our research presents a general method for approximately identifying near-optimal mechanisms within a broader subadditive valuation settings.

\citet{Brero2020ReinforcementLO} introduced the use of reinforcement learning in auctions for the sequential setting, framing the design problem as a Markov Decision Process (MDP)---for example with the history of decisions so far constituting the {\em state}, and  posted prices and which agent to visit next constituting the {\em action}.  The reward is the payment collected from the bidder at each step. The overarching objective is to learn an optimal policy function that  maps  states to actions (the ``mechanism,")  maximizing the expected  reward. To learn this policy function, \citet{Brero2020ReinforcementLO} use the Proximal Policy Optimization (PPO) algorithm. However, their focus is on a  class of simple mechanisms known as {\em Sequential Pricing Mechanisms} (SPMs) with a straightforward menu structure where each item is associated with a posted price and the cost of a bundle is simply the sum of prices of its constituent items. In the realm of optimal multi-item auctions, bundling (beyond item-wise posted price) is a well-established necessity, even for simpler additive valuations. This calls for approaches that center on learning a more expressive menu structure that is capable of accommodating complex valuations. Consequently, in the MDP formulation, the action space needs to expand, posing challenges for directly applying standard RL methodologies to our setting.

\subsection{Main Challenges.}

Most conventional reinforcement learning techniques are ill-suited for the large and continuous action spaces in this combinatorial auction setting. For instance, Q-learning based approaches are tailored for discrete action spaces. While one workaround involves discretizing the action space, this strategy proves impractical as it often involves evaluating all possible actions, which doesn't scale efficiently. 

Although Proximal Policy Optimization (PPO), Soft Actor-Critic (SAC), and Deep Deterministic Policy Gradient (DDPG) can handle continuous action spaces, they often struggle with sample efficiency and convergence issues due to the curse of dimensionality that arise in extremely large action spaces. Policy gradient methods suffer from high variance in gradient estimates, leading to convergence issues whereas Actor-Critic methods, which involve learning a Q-function over both state and action spaces, require a prohibitively large number of samples to adequately cover their domain and generalize well. For these approaches to work, we would need extensive parallel environments (for samples) or extended algorithm runtime (for convergence), both of which are constrained by either hardware limitations or time constraints.

\subsection{Our Contributions.}

We introduce a new general-purpose methodology for learning revenue-maximizing sequential combinatorial auctions. In contrast to traditional reinforcement learning problems, the transition dynamics in the context of sequential auction design can be modeled accurately. This allows us to compute analytical gradients easily, which can be used for more accurate parameter updates. Approaches involving analytical gradients have demonstrated high efficiency in domains where they are applicable, such as differentiable physics and robotics~\citep{de2018end, innes2019differentiable, hu2019difftaichi, Qiao2020Scalable, wiedemann2023training}. Recognizing this advantage, and inspired by the recent advancements in gradient-based frameworks such as differentiable economics for learning revenue maximizing auctions, we introduce a new approach that uses {\em fitted policy iteration} to learn sequential auctions. This method uses neural networks to approximate the value function and policy function and iteratively updates them in a twofold manner: initially refining the value function to align with the policy function and subsequently adjusting the policy function to maximize rewards. While this approach is not commonly used for continuous action spaces due to the often intractable nature of the policy improvement step, we demonstrate how it can be efficiently implemented in our setting.

We implement the policy improvement step through the use od neural networks that can model revenue-optimal auctions for the single-buyer setting, such as RochetNet~\citep{dutting2023jacm} or MenuNet~\citep{shen2018automated}. Such neural networks typically represents DSIC mechanisms through a menu of items, where each menu option is parameterized by trainable allocation and payment variables. The buyer selects an option that maximizes their utility based on their realized type and pays the specified amount from the menu. The objective is to maximize expected payment associated with the utility maximizing option for each type. This is achieved by training the allocation and payment variables with samples from the type distribution to maximize expected revenue.

We extend neural network architectures for static revenue optimal auction design like RochetNet to the sequential setting by modifying the menu structure to include an additional term called the "offset". The offset captures the value of potential future states ("continuation value"). By incorporating the offset into the optimization objective, our approach aims to maximize both the current revenue and the anticipated revenue from future states. This adapted network, trained through first-order gradient methods, offers a more effective and stable approach to policy iteration in continuous action spaces. Instead of parameterizing and learning menu options for each state and bidder, we learn the weights of a neural network that takes in as input a state and outputs the corresponding menu options. This let's us handle combinatorial auctions with up to $20$ agents and a menu size of up to $1024$.

Furthermore, we demonstrate the scalability of our approach to accommodate a large number of buyers and items, extending to as many as $50$ buyers and $50$ items for the  additive-valuation setting, significantly surpassing the capabilities of existing methods based on differentiable economics for auction design. This scalability is achieved by learning the menus corresponding to the {\em entry-fee mechanisms}. While this is less expressive than a combinatorial menu structure, it is more computationally efficient, making it a viable and efficient solution for scenarios involving a larger number of buyers and items. 

\section{Related Works.} Our work is closely related with the literature of sequential combinatorial auctions design~\citep{cai2017simple, cai2022simple}, in which the previous papers focus on the theoretical characterization of the approximation results. ~\cite{cai2017simple} first proposed a simple sequential posted price with entry fee mechanism (SPEM) and proved that the existence of such mechanism achieves constant approximation to the optimal mechanism in XOS valuation setting and $O(\log(m))$ approximation in the subadditive valuation setting. Later work by~\cite{cai2022simple} provides a polynomial algorithm based on linear program (LP) to compute the simple mechanism to achieve constant approximation in the item-independent XOS valuation setting. Compared to these existing literature, our work focuses on finding near-optimal SCAs with a general auction format (e.g., we allow bundle pricing rather than item pricing) through the use of deep learning based approaches. In addition, there is a rich literature designing approximation-results for online combinatorial auctions using simple posted-price mechanism (through Prophet Inequality), in which the items arrive in a sequential manner~\citep{Feldman15, Duttingfocs17, ASfocs19, onlineCA2021}. Another loosely related research direction is dynamic mechanism design~\citep{Ashlagi16, Papadimitriou2022, Mirrokni2020}, where the previous papers focus on mechanism design problem dealing with forward-looking agents that the agents may deviate their truthful reporting at current rounds to get more utility in the long-run. 

The application of deep learning to auction design has garnered significant research attention in recent years, opening up exciting new avenues for achieving optimal outcomes. The pioneering work by~\cite{dutting2023jacm}  demonstrated the potential of deep neural networks for designing optimal auctions, recovering known theoretical solutions and generating innovative mechanisms for the multi-bidder scenarios. Subsequent research extended the original neural network architectures, RegretNet and RochetNet proposed in~\cite{dutting2023jacm}, to specialized architecture to handle IC constraints~\citep{shen2018automated},  to handle different constraints and objectives~\citep{feng2018deep, golowich2018deep}, adapt to contextual auctions setting~\cite{pmlr-v162-duan22a}, and incorporate with human preference~\citep{peri2021preferencenet}. In addition, there are other research efforts to advance the training loss of RegretNet~\citep{rahme2020auction}, explore new model architectures\citep{rahme2020permutation, pmlr-v162-duan22a, ivanov2022optimaler}, certify the strategyproofness of the RegretNet~\cite{curry2020certifying} and extend RochetNet framework for Affine Maximizer Mechanisms for the setting that there are dynamic number of agents and items~\citep{duan2023a}. Recently,~\cite{zhang2023computing} apply deep learning techniques to compute optimal equilibria and mechanisms via learning in zero-Sum extensive-form games, in the setting when the number of agents are dynamic.

There is also previous interest in  applying deep reinforcement learning (DRL)  to auction design. \cite{shen2020reinforcement} propose a DRL framework for sponsored search auctions to optimize reserve price by modeling the dynamic pricing problem as an MDP.  DRL has also  been used to compute near-optimal,  sequential posted price auctions~\citep{Brero2020ReinforcementLO, brero2023stackelberg}, where the authors model the bidding strategies of agents through an RL algorithm and analyze the Stackelberg equilibrium of the sequential mechanism (perhaps also allowing for an initial stage of communication).
Meanwhile, \cite{allpayrl2022} investigated the use of DRL for  all-pay auctions through conducting the simulations from the multi-agent interactions. Existing papers using DRL for auction design focus on additive or unit- or additive-demand valuation settings. Whereas, in our paper, we propose a general, sequential combinatorial auction mechanism through DRL, which has potential to handle much larger combinatorial valuation settings by utilizing the sequential auction structure along with our customized DRL algorithm.

\section{Preliminaries}\label{sec:prelim}

We consider a setting with $n$ bidders, denoted by the set $N = \{1, \ldots, n\}$ and $m$ items, $M = \{1, \ldots, m\}$. Each bidder $i \in [n]$ has a valuation $v_i: 2^M \to \mathbb{R}$, where $v_i(S)$ denotes the valuation of the subset $S \subseteq M$ . Each bidder valuation $v_i$ is drawn independently from a distribution $\mathcal{V}_i$. We denote $\mathcal{V} = \Pi_{i = 1}^{n} \mathcal{V}_i$. 

We consider the sequential setting where the auctioneer visits the bidder in lexicographical order. He knows the distribution $\mathcal{V}$ but not the bidder's private valuations $v_i$ for $i \in N$. The mechanism design goal is to design a set of allocation rules and payment rules that determine how these items are allocated and how much each bidder is charged such that the expected revenue (gross payment collected from the bidders) is maximized. We denote the allocation rule by $g = (g_1, \ldots, g_n)$ where $g_i \subseteq M, $ denotes the subset of items allocated to bidder $i$. Since the items can't be over-allocated, we require $g_i \cap g_j = \varnothing$ for all $i \not= j \in N$. We denote the payment rule as $p = (p_1, \ldots, p_n)$ where $p_i \in \mathbb{R}_{\geq 0}$ denotes the payment collected from bidder $i$.

In this work, we study Sequential Combinatorial Mechanisms with Menus. Given a bidder $i$ and a set of available items denoted by $S$, the mechanism consists of a pricing function $\xi_{i, S}: 2^M \to \mathbb{R}_{\geq 0}$ for each bidder $i$ and the set of available items $S$ that maps a subset of items (i.e. bundle) to its price.  Additionally, we require $\xi_{i,S}(T) = \infty$ if $T \not\subseteq S$ to prevent the over-allocation of items. The auctioneer engages with bidders in lexicographic order, presenting them with a menu of bundles along with their corresponding prices. Subsequently, the bidder being visited selects their preferred bundle, pays the associated price, and exits the auction. The favorite bundle for the bidder is simply defined as the bundle that maximizes their expected utility i.e. $S_i^{*}={\arg\max}_{T\subseteq S} v_i(T)-\xi_{i, S}(T)$.  See Algorithm~\ref{alg:sca} for the details of the mechanism.

\begin{figure}[ht]
  \centering
  \begin{minipage}{.8\linewidth}
\begin{algorithm}[H]
\begin{algorithmic}[1]
\Require $\xi_{i, S}(\cdot)$ is bidder $i$'s pricing function when the menu consists of bundles of items from set $S$
\State $S\gets [m]$
\For{$i \in [n]$}
    \State Show bidder $i$ the menu $\mathcal{M}$ of available bundles and their corresponding prices, i.e. $\mathcal{M}  = \{ (T, \xi_{i,S}(T)) \mid T \subseteq S\} $ 
    \State $i$ picks their favorite bundle $S_i^{*} \subseteq S$, and pays $\xi_{i,S}(S_i^*)$.
    \State $S\gets S\backslash S_i^{*}$.
\EndFor
\end{algorithmic}
\caption{{\sf Sequential Combinatorial Mechanisms with Menus}}
\label{alg:sca}
\end{algorithm}
\end{minipage}
\end{figure}

\begin{restatable}{remark}{icir}
    The mechanism described in Algorithm~\ref{alg:sca} is Dominant Strategy Incentive Compatible (IC) and Individual Rational (IR) if $\xi_{i, S}(\varnothing) = 0$ for all $i \in [n], S \subseteq M$. 
\end{restatable}

This menu structure is more expressive than the Sequential Posted Price with Entry Fee Mechanism (SPEM)~\citep{cai2017simple} and Sequential Price Mechanisms (SPM)~\citep{Brero2020ReinforcementLO}. In a SPEM mechanism, every bidder $i$ at a particular time step is shown the set of available bundles $S$ and the posted prices $p_{i,j}$ for every item $j$ in $S$ and is charged an entry-fee $\delta_i(S)$ to participate. If the bidder accepts, she can picks any bundle $T \subseteq S$ by paying an additional $\sum_{j \in T} p_{i,j}$. In SPMs, bidder $i$ is shown a set of available items $S$ and is charged $\sum_{j \in T} p_{i,j}(S)$. The price here depends on the state; however, there is no entry fee.

Given the sequential nature of this problem, we follow~\cite{Brero2020ReinforcementLO} and formulate learning the pricing function as a finite horizon Markov Decision Process (MDP). This MDP is defined by:
\begin{itemize}
    \item \textbf{State Space} $\mathcal{S}$: The state $s^t$ at each time step $t$ is a tuple consisting of bidder under consideration and the set of items remaining at time $t$. We have $ s^t = (i^t, S^t)$. 
    \item \textbf{Action Space} $\mathcal{A}$: The action $a^t$ at each time step $t$ is a vector of bundle prices. We thus have $a^t \in \mathbb{R}^{2^M}$ where $a^t_{T}$ denotes the price associated with bundle $T$. We have $a^t_T = \xi_{s^t}(T)$.
    \item \textbf{State Transitions:} The agent $i^{t}$ under consideration chooses their favorite bundle based on the realized private valuation $v_{i^t}$, and state becomes $s^{t + 1} = (i^{t + 1}, S^{t+1})$ where $S^{t + 1} = S^t\setminus \arg\max_{T\subseteq S^t} v_{i^t}(T)-a^t_T$. The stochasticity is in the private valuation $v_{i^t} \in \mathcal{V}_{i^{t}}$.
    \item \textbf{Reward} $r: \mathcal{S} \times \mathcal{A} \times \mathcal{S} \to \mathbb{R}$: The reward is simply the payment collected from the bidder at time $t$. This is given by price associated with the bundle picked by agent $i^t$. We have $r(s^t, a^t, s^{t+1}) = a^t_{S^t \setminus S^{t + 1}} = \xi_{s^t}(S^t \setminus S^{t + 1})$.
\end{itemize}

The discount factor $\gamma = 1$. Let $\pi: \mathcal{S} \to \mathcal{A}$ denote the policy function that maps the states to actions. Let $\rho_\pi(s^t, a^t, s^{t+1})$ denote the state-action-state marginals of the trajectory induced by $\pi(s^t)$. The objective here is to learn a policy function $\pi_*$ that maximize expected rewards i.e $\pi_* = \argmax_{\pi} \sum_{t = 1}^{n} \mathbb{E}_{(s^t, a^t, s^{t+1})\sim \rho_{\pi}} \left[ r(s^t, a^t, s^{t + 1}] \right]$. Let $V_\pi:\mathcal{S}\to\mathbb{R}$ denote the value function. $V_\pi(s^t) = \sum_{t' = t}^{n}  \mathbb{E}_{(s^{t'}, a^{t'}, s^{{t'}+1})\sim \rho_{\pi}} \left[ r(s^{t'}, a^{t'}, s^{{t'} + 1}) \right]$ i.e. $V_\pi(s^t)$ is the expected sum of rewards if we start at state $s^t$ and act according to policy $\pi$. Let $V_*$ denote the optimal value function i.e. $V_* = V_{\pi^*}$. For the sake of notational convenience, we define $V_{\pi}(s^{n+1}) = 0$ for all policy functions $\pi$. Additionally, the policy function is, in essence, the bundle pricing function. Thus, we have $\pi_T(s^t) = \xi_{s^t}(T)$, where by $\pi_T(s^t)$ we mean the coordinate of $T$ in the vector $a^t=\pi(s^t)$. 

Given this MDP, one can use tools (such as Gymnasium~\citep{towers_gymnasium_2023}) to write an environment that simulates the behavior of the bidders and an off-the-shelf reinforcement learning algorithm suitable for continuous action spaces (such as PPO or SAC) to learn the parameters of policy functions to maximize expected rewards. As discussed in Section~\ref{sec:intro}, these traditional RL approaches encounter several issues in a large action space setting and our empirical findings corroborate the same. It is important to highlight that in our specific context the action space is exceptionally high-dimensional, even for a reasonably small number of items, as it grows exponentially. For instance, even with just $10$ items, the action space expands to a size of $2^{10}$. 

Note that in our setting, given agents' realized valuation, we can accurately predict subsequent states. The randomness in this context arises from these realized values, but we know their distributions a priori. This allows us to model the transition functions differentiably. Leveraging this unique advantage, we design new approaches that use first-order gradient methods along with policy iteration to learn menus for different states.

\section{Method}
\label{sec:methods}

Policy iteration consists of two alternating steps --- a {\em policy evaluation} step that computes a value function consistent with the current policy, and a {\em policy improvement} step that uses the value functions to greedily find better policies. Each iteration is a monotonic improvement over the previously computed policies and often converges quickly in practice. For finite-state MDPs with a finite action space, it has also been shown to converge to the optimal policy. However, we are dealing with continuous action spaces for which the policy immprovement step is intractable. In this section, we show how we address this challenge.

Since we know the transition dynamics, we can use it to simplify this step. Given a state $s^t$ and an action $a$, we can exactly estimate the next state conditioned on the realized private value $v \sim \mathcal{V}_{i^t}$. 

\begin{restatable}{proposition}{pol} For a current policy $\pi$ and value function $V_\pi(.)$, the improved policy $\pi'$ for a state $s^t$ is given by:
\begin{equation}\label{eqn:pol}
\pi'(s^t) = \argmax_{a^t}\hspace{-0.5em}\mathop{\E}_{v\sim\mathcal{V}_{i^t}} \left[a^t_{S^t_*(v)} + V_\pi(i^{t+1}, S^t\setminus S^t_*(v))\right]
\end{equation}
\end{restatable}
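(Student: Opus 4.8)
The plan is to recognize this as the standard policy-improvement step of policy iteration, specialized to the transition dynamics of our MDP. First I would recall that in policy iteration the improved policy acts greedily with respect to the state--action value function, i.e. $\pi'(s^t) = \argmax_{a^t} Q_\pi(s^t, a^t)$, where $Q_\pi(s^t, a^t) = \E_{s^{t+1}}\!\left[ r(s^t, a^t, s^{t+1}) + \gamma\, V_\pi(s^{t+1}) \right]$ is the expected one-step reward plus the discounted continuation value under $\pi$. Setting $\gamma = 1$ as specified removes the discount factor, so the task reduces to writing $Q_\pi$ explicitly for our reward and transition model.

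The key step is to rewrite the expectation over next states as an expectation over the realized valuation. Because the dynamics are deterministic once $v \sim \mathcal{V}_{i^t}$ is drawn, the successor state is the single point $s^{t+1} = (i^{t+1}, S^t \setminus S^t_*(v))$, where $S^t_*(v) = \argmax_{T \subseteq S^t} v(T) - a^t_T$ is the bundle that bidder $i^t$ selects from the posted menu. I would then substitute the reward $r(s^t, a^t, s^{t+1}) = a^t_{S^t \setminus S^{t+1}}$ and observe that $S^t \setminus S^{t+1} = S^t_*(v)$, so the reward term collapses to $a^t_{S^t_*(v)}$ and the continuation term becomes $V_\pi(i^{t+1}, S^t \setminus S^t_*(v))$. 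Combining these inside the expectation over $v$ yields exactly the bracketed objective of \eqref{eqn:pol}, and taking the argmax over $a^t$ completes the identification.

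The main subtlety worth flagging is that the action $a^t$ enters the objective in two coupled ways: directly through the payment $a^t_{S^t_*(v)}$, and indirectly through the selection $S^t_*(v)$, which depends on the entire price vector $a^t$ and therefore governs both the realized reward and the successor state simultaneously. Making this dependence explicit is what distinguishes our formulation from a generic $Q$-function maximization and is precisely what makes the improvement step concretely computable in our differentiable setting. Everything else is a direct substitution into the greedy update, so no additional machinery is required beyond the determinism of the transition conditioned on $v$.
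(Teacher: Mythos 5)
Your proposal is correct and follows essentially the same route as the paper's proof: both start from the greedy one-step lookahead $\argmax_{a^t} \E\left[ r + V_\pi \right]$ and use the fact that, conditioned on the realized valuation $v$, the transition is deterministic with the chosen bundle $S^t_*(v)$ determining both the reward $a^t_{S^t_*(v)}$ and the successor state $(i^{t+1}, S^t\setminus S^t_*(v))$. The paper merely makes the intermediate change of measure explicit (re-indexing the sum over next states as a sum over bundles $T$ with selection probabilities $\hat{p}(T\vert s^t,a)$, then conditioning on $v$ so that $\hat{p}(T = S^t_*(v)\vert v, s^t, a)=1$), which you compress into a single substitution step.
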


See Appendix~\ref{app:pol} for the proof. When $t = n$, we have $V_\pi(i^{n+1}, S^n\setminus S^n_*(v)) = 0$ and the objective reduces to maximizing $\mathbb{E}_{v\sim\mathcal{V}_{i^n}} \left[a^n_{S^n_*(v)}\right]$. This objective corresponds to learning a revenue maximizing auctions with a deterministic menu of options over items in $S^n$ for a single buyer $i^n$. Note that this objective is not smooth due to the presence of $S^n_*(v)$. Following the training of RochetNet, we relax our objective to a smoother version:  $\mathbb{E}_{v\sim\mathcal{V}_{i^n}} \left[\sum_{T \in S^n}\Delta_{T}(v) \cdot a^n_{T}\right]$. Here, $\Delta(v)$ is a softmax over the utilities of picking different subsets for a given valuation $v$. We optimize this objective using samples drawn from agent $i^n$'s value distribution over items in $S^n$. While computing this softmax, we scale the utilities by $\frac{1}{\tau}$ where $\tau$, a hyperparameter, controls the degree of smoothness. As $\tau \to 0$, $\Delta(v)$ converges to a one-hot vector with a value $1$ at the index corresponding to the utility-maximizing bundle thus recovering our original objective.

For $t < n$, we need to account for the additional non-zero term while solving the maximization problem. Every menu option now consists of a bundle, its corresponding price, and the value of the next state if the bundle is picked. While there is no change to how the utility-maximizing bundles are picked, this "offset" term corresponding to the picked bundle is taken into account while maximizing revenue. The smoother objective for this case is thus  $\mathbb{E}_{v\sim\mathcal{V}_{i^n}} \left[\sum_{T \in S^n}\Delta_{T}(v) \cdot \left(a^t_{T} + \phi(T)\right)\right]$. Here, $\phi(T) = V_\pi(i^{t+1}, S^t\setminus T)$ represents the offset, which is the expected revenue from future states if bundle $T$ is picked by $i^t$.

We show how we can implement the policy improvement step in Algorithm~\ref{alg:rochof}.

\begin{figure}[ht]
  \centering
  \begin{minipage}{.8\linewidth}
\begin{algorithm}[H]
\begin{algorithmic}[1]
\Require Hyperparameters $\tau, \Gamma, \eta, \ell$
\Function{PI}{$\left(s_t, V_\pi(i^{t+1}, ~\cdot~), \mathcal{V}_{i^t}\right)$}
\State $a_T \gets 0, \ \forall  T\subseteq S$
\State $\phi(T) \gets V_\pi(i^{t+1}, S^t\setminus T), \ \forall  T\subseteq S$
\For{$\text{iter} \in {1, \ldots, \Gamma}$}
    \State Receive $\{v_1, \ldots, v_{\ell}\} \sim \mathcal{V}_{i^t}$ 
    \State Construct menu $\mathcal{M} = \{T, a_T, \phi(T)\}_{T\subseteq S}$ 
    \State $u_{j,T} \gets v_j(T) - a_T~\forall T \subseteq S^t, \forall j \in [\ell]$
    \State $\Delta_{j, \cdot} \gets \mathsf{softmax}({u_{j,\cdot}}/{\tau})$
    \State $\mathcal{L}(a) \gets -\sum_{j \in \ell} \sum_{T \subseteq S^t} \Delta_{j, T} \cdot \left(a_T + \phi(T)\right) $
    \State $a \gets a - \eta \nabla \mathcal{L}(a)~\forall T \subseteq S^t$
\EndFor
\State $\pi_T(s^t) \gets a_T ~\forall T \subseteq S^T$
\State $V_\pi(s^t) \gets \sum_{j \in \ell} (a_{S^t_*(v_j)} + \phi(S^t_*(v_j)))$
\State {\bf Return} $\pi_T(s^t), V_\pi(s^t)$
\EndFunction
\end{algorithmic}
\caption{{\sf Policy Improvement Step}}
\label{alg:rochof}
\end{algorithm}
\end{minipage}
\end{figure}

Given a policy $\pi$, it's straightforward to estimate the value function Monte-Carlo and Temporal Difference Learning methods.  Since we know the transition dynamics, we can use this to further improve our value function estimates. For a state $s_t$, the value function can be estimated as follows: 
\begin{equation}\label{eq:vloss}
V_\pi(s_t) = \mathop{\E}_{v\sim\mathcal{V}_{i^t}} \left[\pi_{S^t_*(v)}(s_t) + V_\pi(i^{t+1}, S^t\setminus S^t_*(v))\right]
\end{equation}

\subsection{Exact Method for Small Number of States}

When the number of states is small, we can use dynamic programming to solve for the optimal policy.  Since the states are never repeated for a given episode and policy improvement steps for a state $s^t$ only depend on future time steps, we can start by computing the policy improvement for all possible states at $t = n$ and proceed backward to $t = 0$. To be more precise, for a state $s^t = (i^t, T)$ for $T \subseteq [m], t = n$, we perform the policy improvement step to compute $\pi(s^t)$. We set $V(s^t)$ to be the revenue computed from this step. Next, we repeat this process for all states $s^t = (i^t, T)$ such that $T \subseteq [m], t = n - 1$ and proceed until we get to the state $(i^1, [m])$. In total, this involves training $ n \times 2^m $ RochetNets, one for each state. For more details, refer to Algorithm~\ref{alg:dp} 

\begin{figure}[ht]
  \centering
  \begin{minipage}{.7\linewidth}
\begin{algorithm}[H]
\begin{algorithmic}[1]
\Require $m, n, \mathcal{V}$
\State $S \gets [m]$
\State $V_\pi(i^{n + 1}, T)  \gets 0 ~\forall T \subseteq S$
\For{$t \in {n, n - 1\ldots, 1}$}
    \For{$s_t \in S$}
        \State $\pi_T(s^t), V_\pi(s^t) \gets \text{PI}(s^t, V_\pi(i^{t+1}, ~\cdot~), \mathcal{V}_{i^t})$
    \EndFor
\EndFor
\end{algorithmic}
\caption{{\sf Dynamic Program (DP)}}
\label{alg:dp}
\end{algorithm}
\end{minipage}
\end{figure}

\subsection{Approximate Methods for Large Number of States}

When the number of states is large, we use a feed-forward neural network (called the critic) to map a state to its value. The critic is denoted by $V^\alpha: N \times \{0, 1\}^{m}  \to \mathbb{R}$. We use another feed-forward network called the actor that maps a state to action. The actor is denoted by $\pi^\theta: N \times \{0, 1\}^{m}  \to \mathbb{R}_{\geq 0}^{2^M}$.

\paragraph{Neural Network Architecture.} The state $s^t$ forms the input to the critic as well as the actor network and is represented as an $m + 1$ dimensional vector. The first index denotes the index of the agent visited at time $t$ and the last $m$ indices are binary numbers that denote the availability of the corresponding item. The first index is used to compute a positional embedding of dimension $d_{emb}$ which is then concatenated with the $m$-dimensional binary vector to form a $m + d_{emb}$ dimensional input to the feed-forward layers. We use $R$ hidden layers with $K$ hidden units and a non-linear activation function for the actor as well as the critic. Note that these networks do not share any parameters.

The critic outputs a single variable capturing the value of the input state. The actor first outputs $p \in \mathbb{R}^{2^m}_{\geq 0}$ variables that correspond to the bundle prices. We use the soft plus activation function to ensure these outputs are positive. We use the $m$-dimensional binary input to compute a boolean mask variable $\beta \in \{0, 1\}^{2^m}$ to denote the availability of a bundle. The final output of the actor network is given as $p \cdot \beta + (\vec{1} - \beta)\cdot K $ where $K$ is a large constant and the product denotes entry-wise masking. This masking ensures that unavailable bundles are assigned a high price ensuring that they are never picked, thereby satisfying feasibility constraints. 

\begin{figure}
\begin{minipage}[t]{0.55\textwidth}
\centering
\begin{algorithm}[H]
\begin{algorithmic}[1]
\Require $\eta_n, \eta_{\pi}, \eta_{\epsilon}, \epsilon_0$
\State Initialize neural network parameters $\theta, \alpha$
\State Initialize noise $\epsilon \gets \epsilon_0$
\State Initialize rollout buffer $\mathcal{D}$
\For{each iteration}
    \For{each \texttt{environment\_step}}
        \State {\em noise} $\sim \mathcal{N}(0, \epsilon)$
        \State $a^t \gets \pi^\theta(s^t) +~${\em noise}
        \State $s^{t+1} \sim p(s^{t+1}| s^t, a^t)$
        \State $\mathcal{D} \gets \mathcal{D} \cup \{s^t, a^t, r(s^t, a^t, s^{t+1}), s^{t+1}\}$
    \EndFor
    \State Compute $V_{targ}^t$ through TD-$\lambda$
    \For{each \texttt{critic\_gradient\_step}}
        \State $J_v(\alpha) = \sum_{s^t \in \mathcal{D}} \| V^t_{targ} - V^\alpha(s^t)\|^2$
        \State $\alpha \gets \alpha - \eta_v \nabla J_v(\alpha)$
    \EndFor
    \For{each \texttt{actor\_gradient\_step}}
        \State $J_\pi(\theta) = \sum_{s^t \in \mathcal{D}} \left[c\mathcal{L}^\theta(s^t) \right]$
        \State $\theta \gets \theta - \eta_{\pi} \nabla J_\pi(\theta)$
    \EndFor
    \State $\epsilon \gets \epsilon  \eta_{\epsilon}$
\EndFor
\end{algorithmic}
\caption{{\sf Fitted Policy Iteration (FPI)}}
\label{alg:fpi}
\end{algorithm}
\end{minipage}
\begin{minipage}[t]{0.44\textwidth}
\centering
\raisebox{-1.5\height}{\includegraphics[width=1.\linewidth,trim=3mm 27mm 90mm 0mm,clip]{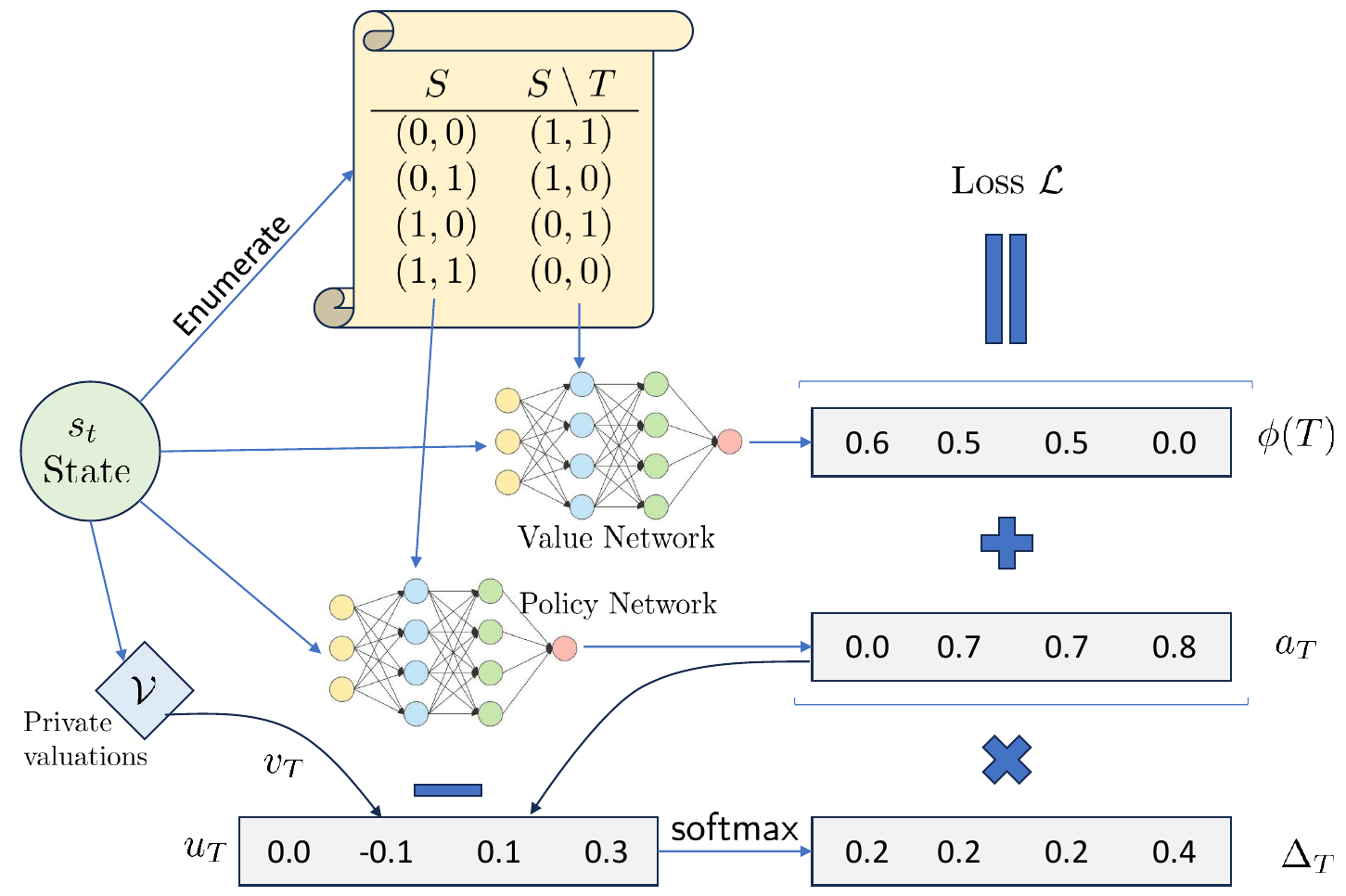}}
\small
\caption{Loss function for the policy improvement step. Given a state, we first enumerate the menu options and the corresponding future states. We use the actor to compute the prices and the critic to compute the offsets. With this we compute the loss in Equation~\ref{eq:piloss}}\label{fig:piloss}
\end{minipage}
\vspace{-5mm}
\end{figure}

\paragraph{Fitted Policy Iteration.} To perform policy iteration, we first randomly initialize the policy network. Then we perform the approximate policy evaluation step. To do this, we first collect several state-action-reward samples and store them in a buffer. To encourage exploration, we add Gaussian noise to the action while collecting the samples. We reduce the magnitude of added noise by a small amount for later iterations. We use $\text{TD}(\lambda)$ to compute expected returns $V^t_{targ}$ for every state $s^t$ in the state-action-reward samples. We use these as targets and minimize an MSE loss to fit the actor to the computed values. We also found it helpful to update $V^t_{targ}$ using Equation~\ref{eq:vloss} after a few initial critic network iterations, and then continue updating the critic network with the new targets.

Once we have trained the actor, we perform the approximate policy improvement step. To do so, we update the weights of the critic to minimize the expected critic loss over all states in the buffer. The critic loss for a state $s^t$, denoted by $c\mathcal{L}$ is simply the negative expected revenue that the policy achieves from state $s^t$. Thus we have:
\begin{equation}\label{eq:piloss}
   c\mathcal{L}^{\theta}(s^t) = -\sum_{j \in \ell} \sum_{T \subseteq S^t} \Delta_{j, T} \left(\pi^\theta_T(s^t) + V^\alpha(i^{t + 1}, S^t\setminus T)\right)
\end{equation}
where $\Delta_{j,T}$ is defined in Line 7 of Algorithm~\ref{alg:rochof}. Refer to Figure~\ref{fig:piloss} for an overview of the loss computation. We repeat the policy evaluation step and policy iteration step until convergence. For more details, refer to Algorithm~\ref{alg:fpi}. 

\subsection{Entry Fee Mechanisms for Extremely Large Number of States}

Note that the current menu structure that we impose involves learning $2^m$ bundle prices per state. The computation of which bundle is utility maximizing requires $O(2^m)$ computations. To scale up our approach to a large number of items, say $m > 10$, we show how we can impose the menu structure of entry-fee mechanisms. This involves learning $m + 1$ values --- posted prices $p_i$ for $i \in [m]$ for the items and an entry fee $\delta$. The price of bundle $T$ is given by $\delta + \sum_{i \in T} p_i$. When the agent valuations are additive, computing the utility-maximizing bundle only requires $O(m \log m)$ computations.

To see this, consider an additive valuation represented by $t = (t_1, \ldots t_m)$. The value for any bundle $T$, denoted by $V(T)$, is given by $\sum_{i\in T} t_i$. To compute the utility-maximizing bundle, sort the vector given by $(t_1 - p_1, \ldots, t_m - p_m)$ to compute indices $\sigma_1, \ldots, \sigma_m$. We thus have $t_{\sigma_1} - p_{\sigma_1} \geq \ldots t_{\sigma_m} - p_{\sigma_m}$. Construct a menu of $\mathcal{M} = \{ T_k, p_k \}_{k \in [m]} \cup \{\varnothing, 0\}$ where $T_k = \{\sigma_1, \ldots, \sigma_k\}$ and $p_k = \delta + \sum_{i\in k} p_{\sigma_i}$. The utility-maximizing bundle is given by $\max_{k \in [m+1]} V(T_k) - p_k$ (where for convenience we denote $T_{m+1}=\varnothing$ and $p_{m+1}=0$). Instead of using masks to set unavailable bundle prices to a large value $K$, we simply set the posted price of an unavailable item to $K$ to satisfy allocation constraints.  We only have to compute utility and the softmax values of these $m$ options (instead of $2^m$) while training the actor networks.

\section{Experimental Results}\label{sec:experiments}

In this section, we present experimental results, comparing our approach with two established analytical mechanism design baselines: the {\em item-wise} and {\em bundle-wise} sequential auctions. The former sells items individually, while the latter treats all items as a unified bundle, selling it sequentially. The optimal policy for both these methods can be computed through Dynamic Programming. Additionally, we benchmark our approach against other reinforcement learning algorithms such as PPO. Approaches involving Q-function learning, like SAC and DDPG, were unstable and did not perform as well as PPO. Notably, we observe that the size of the action space significantly influences training, with a vast action space making it impractical to accurately evaluate and update Q-values for all possibilities. Therefore, we exclusively report results for PPO. Across all our settings, we provide results based on 10,000 episodes. Additional details regarding implementation and hyperparameters can be found in the Appendix~\ref{app:hyp}. Next, we elaborate on the various settings considered in this paper.

\subsection{Constrained Additive Valuations}
Let the values of individual items be denoted by \(t_j \sim \mathcal{V}_j\). In the {\em additive valuation} case, the value of a bundle is calculated as the sum of the individual values of its constituent items: \(V(S) = \sum_{j \in S} t_j\). For {\em unit-demand valuation}, each bidder values only their most preferred item in a bundle, making the value of the bundle equal to the highest-valued item within it: \(V(S) = \max_{j \in S} t_j\). In the context of {\em k-demand valuation}, each bidder values the \(k\) most preferred items in a bundle, with the value of the bundle determined by the highest-valued item among these \(k\): \(V(S) = \max_{R \subseteq S, |R| = k} t_j\).

We consider the following settings each with $n$ agents and $m$ items with $(n, m) \in \{(5,5), (10, 10)\}$:
\begin{enumerate}[label=\Alph*.,ref=\Alph*]
    \item Additive valuations where item values are independent draws from $U[0, 1]$ \label{exp:add_u}
    \item Additive valuations where item $i$'s values are independent draws over $U[0, \frac{i}{m}]$\label{exp:add_asym}
    \item Unit-demand valuations where item values are independent draws over  $U[0, 1]$ \label{exp:unit}
    \item $k-$demand valuations where item values are independent draws over  $U[0, 1]$ and $k = 3$\label{exp:kd}
\end{enumerate}

\begin{table*}[ht]
\centering
\small
\begin{tabular}{|l|c|c|c|c|c|c|}
\hline
\multirow{2}{*}{Setting} & Action Space & \multicolumn{2}{c|}{MD Baselines} & RL baselines & \multicolumn{2}{c|}{Ours} \\ \cline{3-7} 
                       & size         & Item-wise & Bundle-wise & PPO  &   FPI & DP   \\ \hline
Setting \ref{exp:add_u}: $5 \times 5$      &   32 & 3.00 & 2.58 & 3.09 &  3.12 & 3.13 \\
Setting \ref{exp:add_asym}: $5 \times 5$   &   32 & 1.80 & 1.56 & 1.83 &  1.86 & 1.87 \\
Setting \ref{exp:unit}: $5 \times 5$       &    6 & 2.42 & $-$  & 2.43 &  2.43 & 2.43 \\
Setting \ref{exp:kd}: $5 \times 5$         &   26 & 3.00 & 1.83 & 3.08 &  3.10 & 3.11 \\ \hline
Setting \ref{exp:add_u}: $10 \times 10$    & 1024 & 7.41 & 5.57 & 6.68 &  7.59 & 7.60 \\
Setting \ref{exp:add_asym}: $10 \times 10$ & 1024 & 4.07 & 3.11 & 3.80 &  4.16 & 4.16 \\
Setting \ref{exp:unit}: $10 \times 10$     &   11 & 5.92 & $-$  & 5.96 &  6.00 & 6.00 \\
Setting \ref{exp:kd}: $10 \times 10$       &  176 & 7.37 & 6.06 & 7.11 &  7.54 & 7.54 \\\hline
\end{tabular}
\caption{Test Revenue achieved by different approaches for the Constrained Additive Setting. }
\label{tab:add}
\end{table*}

We present the results in Table~\ref{tab:add}.  PPO performs adequately in smaller settings involving 5 agents and 5 items. However, its performance degrades when the scale is increased to 10 agents and 10 items. In contrast, our proposed method based on Dynamic Programming (DP) and Fitted Policy Iteration (FPI) consistently outperforms established baselines. Importantly, FPI achieves these results with a computational time of less than 20 minutes on a single Tesla A100 GPU. In comparison, PPO required several hours of training under these conditions. We terminated training PPO after 20,000 iterations, which took approximately 5 hours. The Dynamic Programming (DP) approach took 6 hours for $n = m = 10$.

\subsection{Combinatorial Valuations}
In the combinatorial setting, we consider $n \in \{10, 20\}$ agents and $m = 10$ items with the following bundle-wise valuations listed below:
\begin{enumerate}[label=\Alph*.,start=5, ref=\Alph*]
    \item Subset valuations are independent draws over $U[0, \sqrt{|S|}]$ for every subset S\label{exp:c1}
    \item Subset valuations are given by $\sum_{j \in T} t_j + c_T$ where $t_j \sim U[1, 2]$ and the complimentarity parameter $c_T \sim U[-|S|, |S|]$\label{exp:c2}
\end{enumerate}

The results are presented in Table~\ref{tab:comb}. Our approach, specifically designed to navigate high-dimensional spaces more effectively, outperforms PPO in combinatorial settings as well. Notably, our method (FPI) required only 20 minutes of training on a Tesla A100 GPU while DP took between 5 and 10 hours, depending on the number of agents. PPO was trained for 20,000 iterations, which took approximately 12 hours.

\begin{table*}[!htpb]
\centering
\small
\begin{tabular}{|l|c|c|c|c|c|c|}
\hline
\multirow{2}{*}{Setting} & Action Space & \multicolumn{2}{c|}{MD Baselines} & RL baselines & \multicolumn{2}{c|}{Ours} \\ \cline{3-7} 
                       & size & Item-wise & Bundle-wise     & PPO   & FPI    & DP \\ \hline
Setting~\ref{exp:c1}: $10 \times 10$ & 1024 &  6.20 &  2.35 &  4.19 &  7.00 &   7.01 \\
Setting~\ref{exp:c2}: $10 \times 10$ & 1024 & 22.25 & 21.69 & 23.82 & 24.48 &  24.43 \\ \hline
Setting~\ref{exp:c1}: $20 \times 10$ & 1024 &  8.08 &  2.68 &  4.31 &  8.17 &   8.19 \\
Setting~\ref{exp:c2}: $20 \times 10$ & 1024 & 24.29 & 22.82 & 24.57 & 25.70 &  25.48 \\\hline
\end{tabular}
\caption{Test Revenue achieved by different approaches for the Combinatorial Valuations Setting.}\label{tab:comb}
\end{table*}

\subsection{Scaling up}

We scale our approach to an even larger number of buyers and items. For this setting, we use the entry-fee mechanisms to characterize the menus, enhancing computational efficiency and reducing memory requirements, as we only manage $m + 1$ menu options instead of $2^m$ options at any given time. We consider Setting~\ref{exp:add_u} with $(n, m) \in \{(20, 20), (50, 50) \}$

We present the results in Table~\ref{tab:scaling}. We observe that that with an increasing number of agents, {\em item-wise Myerson} emerges as a competitive baseline for the additive valuation setting, with policy iteration showing only a slight performance edge. Note that PPO demonstrates significantly improved performance in this setting due to the smaller action-space associated with the entry fee mechanism.

\begin{table*}[!h]
\centering
\small
\begin{tabular}{|c|c|c|c|c|c|c|c|}
\hline
\multirow{2}{*}{Setting} & Action Space &\multicolumn{2}{c|}{MD Baselines} & RL Baselines & \multicolumn{2}{c|}{Ours} \\ \cline{3-7} 
 & Size & Item-wise & Bundle-wise & PPO &  FPI & DP (sym) \\ \hline
Setting \ref{exp:add_u}: $20 \times 20$ & 21 &16.42 & 11.38 & 16.23 &  17.11 & 17.14\\
Setting \ref{exp:add_u}: $50 \times 50$ & 51 & 46.47 & 28.20 & 43.64 &  46.71 & 46.54\\ \hline
\end{tabular}
\caption{Test Revenue achieved by different approaches for the Large Scale Setting.}\label{tab:scaling}
\end{table*}

For these settings, we also report the performance of DP, which was trained using a fully expressive menu but with symmetry imposed. This reduces the number of RochetNets trained from $n \times 2^m$ to $nm$ with each RochetNet having $m$ menu options (one for every bundle size) instead of $2^m$. We were unable to train without the imposition of symmetry due to hardware and time limitations.

\section{Conclusion}\label{sec:conclusion}
We have introduced a new methodological approach to the problem of learning simple and strategyproof mechanisms for sequential combinatorial auctions. We formulate this as a reinforcement learning problem and show how we can use first order gradients to learn menu options that maximize expected revenue. 
Through extensive experimental results, we've shown the superior performance of this approach compared to other RL methods and well-known analytical solutions.

We also point out some limitations and potential directions for future work. The use of neural networks in approximate methods introduces uncertainties, lacking theoretical guarantees for convergence. Additionally, the optimality of solutions obtained by RochetNet in the policy improvement step remains unknown. Despite these uncertainties, we empirically observe convergence in all our experiments, and it has been shown that RochetNet consistently retrieves optimal solutions when analytical solutions are available~\citep{dutting2023jacm}. Another limitation lies in our dependence on the assumption of having ample samples from valuation distributions. While this is a common practice in empirical approaches to mechanism design, it would be insightful to explore the effectiveness of our approach when the number of available samples is limited. 

Future work could also explore more intricate menu structures beyond entry-fee mechanisms, continuing to seek computational efficiency improvements over  bundle price enumeration. Additionally, there is a compelling question of designing mechanisms where we allow agents to select the best bundle efficiently, potentially in $poly(m)$ time, as suggested by previous research~\citep{schapira2008inapproximability}.
Our current research focused on a fixed order of agent visits, which prompts the exploration of methods to dynamically learn the optimal order in which agents should be visited~\citep{brero2021reinforcement}. Extending our framework to accommodate non-deterministic allocation poses an intriguing challenge, and understanding how it can be implemented in the sequential setting needs further attention. Lastly, it is interesting to assess the ability of our approaches to approximate the non-sequential version of the auction problem. Innovations in this space could involve leveraging AMA-based approaches instead of RochetNet, which would  engage  with several agents
in each step of a sequential auction instead of just one agent.

Our approach also shows promise for broader applicability beyond auction deign for problems involving large action and continuous spaces wherever we can model transitions differentiably (for instance in physics or robotics). For settings where we can accurately approximate value functions and compute the policy improvement step in a tractable and differentiable manner, we anticipate our approach outperforming existing standard RL methods. For instance, this could involve using  linear programming (LP) or convex optimization to compute optimal policies in continuous space tractably and leverage techniques proposed by~\cite{agrawal2019differentiable} to obtain gradients through these solutions.

\section{Acknowledgements}

The source code for all experiments along with the instructions to run it is available from Github at \url{https://github.com/saisrivatsan/deep-seq-auctions/}. Sai Srivatsa Ravindranath would like to thank Costis Daskalakis for an initial discussion on this problem.

\bibliographystyle{abbrvnat}
\bibliography{references}

\begin{thebibliography}{41}
\providecommand{\natexlab}[1]{#1}
\providecommand{\url}[1]{\texttt{#1}}
\expandafter\ifx\csname urlstyle\endcsname\relax
  \providecommand{\doi}[1]{doi: #1}\else
  \providecommand{\doi}{doi: \begingroup \urlstyle{rm}\Url}\fi

\bibitem[Agrawal et~al.(2019)Agrawal, Amos, Barratt, Boyd, Diamond, and Kolter]{agrawal2019differentiable}
A.~Agrawal, B.~Amos, S.~Barratt, S.~Boyd, S.~Diamond, and J.~Z. Kolter.
\newblock Differentiable convex optimization layers.
\newblock \emph{Advances in neural information processing systems}, 32, 2019.

\bibitem[Ashlagi et~al.(2016)Ashlagi, Daskalakis, and Haghpanah]{Ashlagi16}
I.~Ashlagi, C.~Daskalakis, and N.~Haghpanah.
\newblock Sequential mechanisms with ex-post participation guarantees.
\newblock In \emph{Proceedings of the 2016 ACM Conference on Economics and Computation}, EC '16, page 213–214, New York, NY, USA, 2016. Association for Computing Machinery.
\newblock ISBN 9781450339360.

\bibitem[Assadi and Singla(2019)]{ASfocs19}
S.~Assadi and S.~Singla.
\newblock Improved truthful mechanisms for combinatorial auctions with submodular bidders.
\newblock In \emph{2019 IEEE 60th Annual Symposium on Foundations of Computer Science (FOCS)}, pages 233--248, 2019.

\bibitem[Brero et~al.(2020)Brero, Eden, Gerstgrasser, Parkes, and Rheingans-Yoo]{Brero2020ReinforcementLO}
G.~Brero, A.~Eden, M.~Gerstgrasser, D.~C. Parkes, and D.~Rheingans-Yoo.
\newblock Reinforcement learning of sequential price mechanisms.
\newblock In \emph{AAAI Conference on Artificial Intelligence}, 2020.

\bibitem[Brero et~al.(2021)Brero, Eden, Gerstgrasser, Parkes, and Rheingans-Yoo]{brero2021reinforcement}
G.~Brero, A.~Eden, M.~Gerstgrasser, D.~Parkes, and D.~Rheingans-Yoo.
\newblock Reinforcement learning of sequential price mechanisms.
\newblock In \emph{Proceedings of the AAAI Conference on Artificial Intelligence}, volume~35, pages 5219--5227, 2021.

\bibitem[Brero et~al.(2023)Brero, Eden, Chakrabarti, Gerstgrasser, Greenwald, Li, and Parkes]{brero2023stackelberg}
G.~Brero, A.~Eden, D.~Chakrabarti, M.~Gerstgrasser, A.~Greenwald, V.~Li, and D.~C. Parkes.
\newblock Stackelberg pomdp: A reinforcement learning approach for economic design, 2023.

\bibitem[Cai and Zhao(2017)]{cai2017simple}
Y.~Cai and M.~Zhao.
\newblock Simple mechanisms for subadditive buyers via duality.
\newblock In \emph{Proceedings of the 49th Annual ACM SIGACT Symposium on Theory of Computing}, pages 170--183, 2017.

\bibitem[Cai et~al.(2022)Cai, Oikonomou, and Zhao]{cai2022simple}
Y.~Cai, A.~Oikonomou, and M.~Zhao.
\newblock Computing simple mechanisms: Lift-and-round over marginal reduced forms.
\newblock In \emph{Proceedings of the 54th Annual ACM SIGACT Symposium on Theory of Computing}, STOC 2022, page 704–717, New York, NY, USA, 2022. Association for Computing Machinery.
\newblock ISBN 9781450392648.

\bibitem[Curry et~al.(2020)Curry, Chiang, Goldstein, and Dickerson]{curry2020certifying}
M.~Curry, P.-Y. Chiang, T.~Goldstein, and J.~Dickerson.
\newblock Certifying strategyproof auction networks.
\newblock \emph{Advances in Neural Information Processing Systems}, 33, 2020.

\bibitem[de~Avila Belbute-Peres et~al.(2018)de~Avila Belbute-Peres, Smith, Allen, Tenenbaum, and Kolter]{de2018end}
F.~de~Avila Belbute-Peres, K.~Smith, K.~Allen, J.~Tenenbaum, and J.~Z. Kolter.
\newblock End-to-end differentiable physics for learning and control.
\newblock \emph{Advances in neural information processing systems}, 31, 2018.

\bibitem[de~Vries and Vohra(2003)]{CA2003}
S.~de~Vries and R.~V. Vohra.
\newblock Combinatorial auctions: A survey.
\newblock \emph{INFORMS Journal on Computing}, 15\penalty0 (3):\penalty0 284--309, 2003.

\bibitem[Deng et~al.(2021)Deng, Panigrahi, and Zhang]{onlineCA2021}
Y.~Deng, D.~Panigrahi, and H.~Zhang.
\newblock Online combinatorial auctions.
\newblock In \emph{Proceedings of the Thirty-Second Annual ACM-SIAM Symposium on Discrete Algorithms}, SODA '21, page 1131–1149, USA, 2021. Society for Industrial and Applied Mathematics.
\newblock ISBN 9781611976465.

\bibitem[Duan et~al.(2022)Duan, Tang, Yin, Feng, Yan, Zaheer, and Deng]{pmlr-v162-duan22a}
Z.~Duan, J.~Tang, Y.~Yin, Z.~Feng, X.~Yan, M.~Zaheer, and X.~Deng.
\newblock A context-integrated transformer-based neural network for auction design.
\newblock In \emph{Proceedings of the 39th International Conference on Machine Learning}, volume 162 of \emph{Proceedings of Machine Learning Research}, pages 5609--5626. PMLR, 17--23 Jul 2022.

\bibitem[Duan et~al.(2023)Duan, Sun, Chen, and Deng]{duan2023a}
Z.~Duan, H.~Sun, Y.~Chen, and X.~Deng.
\newblock A scalable neural network for {DSIC} affine maximizer auction design.
\newblock In \emph{Thirty-seventh Conference on Neural Information Processing Systems}, 2023.

\bibitem[D\"{u}tting et~al.(2020)D\"{u}tting, Feldman, Kesselheim, and Lucier]{Duttingfocs17}
P.~D\"{u}tting, M.~Feldman, T.~Kesselheim, and B.~Lucier.
\newblock Prophet inequalities made easy: Stochastic optimization by pricing nonstochastic inputs.
\newblock \emph{SIAM Journal on Computing}, 49\penalty0 (3):\penalty0 540--582, 2020.

\bibitem[D\"{u}tting et~al.(2023)D\"{u}tting, Feng, Narasimhan, Parkes, and Ravindranath]{dutting2023jacm}
P.~D\"{u}tting, Z.~Feng, H.~Narasimhan, D.~C. Parkes, and S.~S. Ravindranath.
\newblock Optimal auctions through deep learning: Advances in differentiable economics.
\newblock \emph{J. ACM}, nov 2023.
\newblock ISSN 0004-5411.
\newblock Just Accepted.

\bibitem[FCC()]{spectrum}
FCC.
\newblock URL \url{https://www.fcc.gov/auctions}.

\bibitem[Feldman et~al.(2015)Feldman, Gravin, and Lucier]{Feldman15}
M.~Feldman, N.~Gravin, and B.~Lucier.
\newblock Combinatorial auctions via posted prices.
\newblock In \emph{Proceedings of the Twenty-Sixth Annual ACM-SIAM Symposium on Discrete Algorithms}, SODA '15, page 123–135, USA, 2015. Society for Industrial and Applied Mathematics.

\bibitem[Feng et~al.(2018)Feng, Narasimhan, and Parkes]{feng2018deep}
Z.~Feng, H.~Narasimhan, and D.~C. Parkes.
\newblock Deep learning for revenue-optimal auctions with budgets.
\newblock In \emph{Proceedings of the 17th International Conference on Autonomous Agents and Multiagent Systems}, pages 354--362, 2018.

\bibitem[Gemp et~al.(2022)Gemp, Anthony, Kramar, Eccles, Tacchetti, and Bachrach]{allpayrl2022}
I.~Gemp, T.~Anthony, J.~Kramar, T.~Eccles, A.~Tacchetti, and Y.~Bachrach.
\newblock Designing all-pay auctions using deep learning and multi-agent simulation, 2022.

\bibitem[Golowich et~al.(2018)Golowich, Narasimhan, and Parkes]{golowich2018deep}
N.~Golowich, H.~Narasimhan, and D.~C. Parkes.
\newblock Deep learning for multi-facility location mechanism design.
\newblock In \emph{IJCAI}, pages 261--267, 2018.

\bibitem[Hu et~al.(2019)Hu, Anderson, Li, Sun, Carr, Ragan-Kelley, and Durand]{hu2019difftaichi}
Y.~Hu, L.~Anderson, T.-M. Li, Q.~Sun, N.~Carr, J.~Ragan-Kelley, and F.~Durand.
\newblock Difftaichi: Differentiable programming for physical simulation.
\newblock In \emph{International Conference on Learning Representations}, 2019.

\bibitem[Innes et~al.(2019)Innes, Edelman, Fischer, Rackauckas, Saba, Shah, and Tebbutt]{innes2019differentiable}
M.~Innes, A.~Edelman, K.~Fischer, C.~Rackauckas, E.~Saba, V.~B. Shah, and W.~Tebbutt.
\newblock A differentiable programming system to bridge machine learning and scientific computing.
\newblock \emph{arXiv preprint arXiv:1907.07587}, 2019.

\bibitem[Ivanov et~al.(2022)Ivanov, Safiulin, Filippov, and Balabaeva]{ivanov2022optimaler}
D.~Ivanov, I.~Safiulin, I.~Filippov, and K.~Balabaeva.
\newblock Optimal-er auctions through attention.
\newblock In A.~H. Oh, A.~Agarwal, D.~Belgrave, and K.~Cho, editors, \emph{Advances in Neural Information Processing Systems}, 2022.

\bibitem[Li(2017)]{li2017obviously}
S.~Li.
\newblock Obviously strategy-proof mechanisms.
\newblock \emph{American Economic Review}, 107\penalty0 (11):\penalty0 3257--3287, 2017.

\bibitem[Mirrokni et~al.(2020)Mirrokni, Paes~Leme, Tang, and Zuo]{Mirrokni2020}
V.~Mirrokni, R.~Paes~Leme, P.~Tang, and S.~Zuo.
\newblock Non-clairvoyant dynamic mechanism design.
\newblock \emph{Econometrica}, 88\penalty0 (5):\penalty0 1939--1963, 2020.

\bibitem[Papadimitriou et~al.(2022)Papadimitriou, Pierrakos, Psomas, and Rubinstein]{Papadimitriou2022}
C.~Papadimitriou, G.~Pierrakos, A.~Psomas, and A.~Rubinstein.
\newblock On the complexity of dynamic mechanism design.
\newblock \emph{Games and Economic Behavior}, 134:\penalty0 399--427, 2022.
\newblock ISSN 0899-8256.

\bibitem[Peri et~al.(2021)Peri, Curry, Dooley, and Dickerson]{peri2021preferencenet}
N.~Peri, M.~J. Curry, S.~Dooley, and J.~P. Dickerson.
\newblock Preferencenet: Encoding human preferences in auction design with deep learning.
\newblock \emph{arXiv preprint arXiv:2106.03215}, 2021.

\bibitem[Qiao et~al.(2020)Qiao, Liang, Koltun, and Lin]{Qiao2020Scalable}
Y.-L. Qiao, J.~Liang, V.~Koltun, and M.~C. Lin.
\newblock Scalable differentiable physics for learning and control.
\newblock In \emph{ICML}, 2020.

\bibitem[Raffin et~al.(2021)Raffin, Hill, Gleave, Kanervisto, Ernestus, and Dormann]{stable-baselines3}
A.~Raffin, A.~Hill, A.~Gleave, A.~Kanervisto, M.~Ernestus, and N.~Dormann.
\newblock Stable-baselines3: Reliable reinforcement learning implementations.
\newblock \emph{Journal of Machine Learning Research}, 22\penalty0 (268):\penalty0 1--8, 2021.
\newblock URL \url{http://jmlr.org/papers/v22/20-1364.html}.

\bibitem[Rahme et~al.(2021{\natexlab{a}})Rahme, Jelassi, Bruna, and Weinberg]{rahme2020permutation}
J.~Rahme, S.~Jelassi, J.~Bruna, and S.~M. Weinberg.
\newblock A permutation-equivariant neural network architecture for auction design.
\newblock In \emph{AAAI}, pages 5664--5672, 2021{\natexlab{a}}.

\bibitem[Rahme et~al.(2021{\natexlab{b}})Rahme, Jelassi, and Weinberg]{rahme2020auction}
J.~Rahme, S.~Jelassi, and S.~M. Weinberg.
\newblock Auction learning as a two-player game.
\newblock In \emph{9th International Conference on Learning Representations, {ICLR} 2021, Virtual Event, Austria, May 3-7, 2021}. OpenReview.net, 2021{\natexlab{b}}.
\newblock URL \url{https://openreview.net/forum?id=YHdeAO61l6T}.

\bibitem[Rassenti et~al.(1982)Rassenti, Smith, and Bulfin]{CAforAirport}
S.~J. Rassenti, V.~L. Smith, and R.~L. Bulfin.
\newblock A combinatorial auction mechanism for airport time slot allocation.
\newblock \emph{The Bell Journal of Economics}, 13\penalty0 (2):\penalty0 402--417, 1982.

\bibitem[Schapira and Singer(2008)]{schapira2008inapproximability}
M.~Schapira and Y.~Singer.
\newblock Inapproximability of combinatorial public projects.
\newblock In \emph{International Workshop on Internet and Network Economics}, pages 351--361. Springer, 2008.

\bibitem[Shen et~al.(2019)Shen, Tang, and Zuo]{shen2018automated}
W.~Shen, P.~Tang, and S.~Zuo.
\newblock Automated mechanism design via neural networks.
\newblock In \emph{AAMAS}, 2019.

\bibitem[Shen et~al.(2020)Shen, Peng, Liu, Zhang, Qian, Hong, Guo, Ding, Lu, and Tang]{shen2020reinforcement}
W.~Shen, B.~Peng, H.~Liu, M.~Zhang, R.~Qian, Y.~Hong, Z.~Guo, Z.~Ding, P.~Lu, and P.~Tang.
\newblock Reinforcement mechanism design: With applications to dynamic pricing in sponsored search auctions.
\newblock In \emph{Proceedings of the AAAI Conference on Artificial Intelligence}, volume~34, pages 2236--2243, 2020.

\bibitem[Tan et~al.(2020)Tan, Leon-Garcia, Wu, and Tsang]{Tan2020resource}
X.~Tan, A.~Leon-Garcia, Y.~Wu, and D.~H.~K. Tsang.
\newblock Online combinatorial auctions for resource allocation with supply costs and capacity limits.
\newblock \emph{IEEE Journal on Selected Areas in Communications}, 38\penalty0 (4):\penalty0 655--668, 2020.

\bibitem[Towers et~al.(2023)Towers, Terry, Kwiatkowski, Balis, Cola, Deleu, Goulão, Kallinteris, KG, Krimmel, Perez-Vicente, Pierré, Schulhoff, Tai, Shen, and Younis]{towers_gymnasium_2023}
M.~Towers, J.~K. Terry, A.~Kwiatkowski, J.~U. Balis, G.~d. Cola, T.~Deleu, M.~Goulão, A.~Kallinteris, A.~KG, M.~Krimmel, R.~Perez-Vicente, A.~Pierré, S.~Schulhoff, J.~J. Tai, A.~T.~J. Shen, and O.~G. Younis.
\newblock Gymnasium, Mar. 2023.
\newblock URL \url{https://zenodo.org/record/8127025}.

\bibitem[Varian and Harris(2014)]{Varian2014}
H.~R. Varian and C.~Harris.
\newblock The vcg auction in theory and practice.
\newblock \emph{American Economic Review}, 104\penalty0 (5):\penalty0 442--45, May 2014.
\newblock \doi{10.1257/aer.104.5.442}.
\newblock URL \url{https://www.aeaweb.org/articles?id=10.1257/aer.104.5.442}.

\bibitem[Wiedemann et~al.(2023)Wiedemann, W{\"u}est, Loquercio, M{\"u}ller, Floreano, and Scaramuzza]{wiedemann2023training}
N.~Wiedemann, V.~W{\"u}est, A.~Loquercio, M.~M{\"u}ller, D.~Floreano, and D.~Scaramuzza.
\newblock Training efficient controllers via analytic policy gradient.
\newblock In \emph{2023 IEEE International Conference on Robotics and Automation (ICRA)}, pages 1349--1356. IEEE, 2023.

\bibitem[Zhang et~al.(2023)Zhang, Farina, Anagnostides, Cacciamani, McAleer, Haupt, Celli, Gatti, Conitzer, and Sandholm]{zhang2023computing}
B.~H. Zhang, G.~Farina, I.~Anagnostides, F.~Cacciamani, S.~M. McAleer, A.~A. Haupt, A.~Celli, N.~Gatti, V.~Conitzer, and T.~Sandholm.
\newblock Computing optimal equilibria and mechanisms via learning in zero-sum extensive-form games.
\newblock In \emph{Thirty-seventh Conference on Neural Information Processing Systems}, 2023.

\end{thebibliography}

\appendix
\section{DSIC and IR}\label{app:md}
\icir*
An auction is {\em dominant strategy incentive compatible (DSIC)}, if each bidder’s utility is maximized by reporting truthfully no matter what the other bidders report. An auction is {\em individually rational (IR)} if each bidder receives a non-negative utility while reporting truthfully.

The Sequential Combinatorial Auction (SCA) with menus is DSIC because the agents pick their utility maximizing bundle. Additionally, the utility of taking part is at least $0$ (this is because the agent has the option to pick the empty bundle $\varnothing$ and is charged $0$)

\section{Proof of Proposition~\ref{eqn:pol}}\label{app:pol}

\pol*
\begin{proof}
For a current policy $\pi$ and value function $V_\pi(.)$, the improved policy $\pi'$ for a state $s^t$ is given by:
\begin{align*}
\pi'(s^t) 
&= \argmax_{a^t} \sum_{s^{t+1}} {p(s^{t+1} \vert s^t, a)}  [r(s^t, a, s^{t + 1}) + V_\pi(s^{t + 1})] \\
&= \argmax_{a^t} \sum_{S^{t+1}\subseteq S^t} {\hat{p}(S^t\setminus S^{t+1}\vert s^t, a)}  [a_{S^t\setminus S^{t+1}}^t + V_\pi(s^{t + 1})]\\
&= \argmax_{a^t} \sum_{T \subseteq S^t} {\hat{p}(T|s^t, a)}  [a^t_T + V_\pi(i^{t+1},S^t\setminus T)]\\
&= \argmax_{a^t} \hspace{-0.5em}\mathop{\E}_{v\sim\mathcal{V}_{i^t}} \hspace{-0.5em}\left[\sum_{T \in S^t} {\hat{p}(T\vert v, s^t, a)} \left[a^t_T + V_\pi(i^{t+1}, S^t\setminus T)\right]\right]\\
&= \argmax_{a^t}\hspace{-0.5em}\mathop{\E}_{v\sim\mathcal{V}_{i^t}} \left[a^t_{S^t_*(v)} + V_\pi(i^{t+1}, S^t\setminus S^t_*(v))\right]
\end{align*} 

Here, $p(s^{t+1}\vert s^t,a)$ denotes the probability of the next state being $s_{t+1}$ when the current state is $s^t$ and $a$ being the action taken (i.e. prices). $\hat{p}(T\vert s^t, a)$ is the probability of bundle $T$ is picked at $s^t$ under pricing function $a$. When $v, s^t$ and, $a$ are known, we have $\hat{p}(T = S^t_*(v)\vert v, s^t, a) = 1$ where $S^t_*(v) = \arg\max_{T\subseteq S^t} v(T)-a^t_T$. 
\end{proof}

\section{Implementation Details and Hyperparameters}\label{app:hyp}

We use the stable-baselines3~\citep{stable-baselines3} package to implement our baselines. 

\paragraph{Actor Critic Networks.} For all our neural networks, we use a simple fully connected neural network with {\em Tanh} activation functions except for the last layer. We use $R = 3$ hidden layers with $k = 256$ hidden units each.

For the actor network in PPO, we use {\em sigmoid} activation functions to squash the output to $[0, 1]$ range. The action distribution is a normal distribution with these sigmoid outputs as means. It is then scaled appropriately. For example, consider setting~\ref{exp:add_u} with $n$ agents and $m$ items. The action space, which comprises of bundle prices, is of size $2^m$. The maximum possible valuation for a subset $T$ would simply be $|T|$. Consequently, we scale the output corresponding to the price of $T$ by $|T|$. We found this this approach performed better than using the {\em SquashedDiagonalGaussian} distribution where the noise is added before squashing the outputs using a {\em tanh} function. For the actor network in our approach, we only use a {\em softplus} activation to ensure the outputs are positive and leave them unscaled.

We make similar modifications for our large-scale settings involving entry-fee menu structure. For these settings, we use {\em sigmoid} functions for the posted price and a {\em softplus} function for the entry-fee for the actor networks in PPO as well as our approach. 

We also found the using an offset before using {\em sigmoid} or {\em softplus} functions helped with training. This ensures that the prices start low but increase gradually. We offset sigmoid function by $0.5$ and softplus function by $1$. Thus, these modified functions are given by:
\begin{align}
    \textit{sigmoid-}\text{with-offset} &= \frac{1}{1 + e^{-(x - 0.5)}} \\
    \textit{softplus-}\text{with-offset} &= \log (1 + e^{(x - 1)})
\end{align}

\paragraph{Training and Evaluation Sets}
Since we have access to the value distributions, we sample valuations online while training. But for testing, we report our results on a fixed batch of 10000 profiles.

\paragraph{Hyperparameters}

We present the hyperparameters used in Dynamic Programming (DP) in Algorithm ~\ref{alg:rochof} and Fitted Policy Iteration (FPI) below:

\begin{table}[h]
\centering
\small
\begin{tabular}{cc}
\textbf{Hyperparameter} & \textbf{Value} \\ \hline
$\ell$             & $2^{15}$    \\
$\tau$             & $100$     \\
$\eta$              & $0.001$     \\
$\Gamma$              & $2000$    
\end{tabular}
\caption{Hyperparameters for Dynamic Programming (DP)}
\end{table}

\begin{table}[h]
\centering
\small
\begin{tabular}{cc}
\textbf{Hyperparameter} & \textbf{Value} \\ \hline
\texttt{num\_iterations} & $20$ \\
\texttt{num\_environments} & $1024$    \\
\texttt{num\_critic\_steps} & $100 + 500$\\
\texttt{num\_actor\_steps} & $50$ \\
$\gamma$ & $1$\\
GAE-$\lambda$ & $0.95$ \\
$\epsilon_0$ & $e^{-2}$ \\
$\ell$ & $256$\\
$\eta_{v}$ & $0.0001$ \\
$\eta_{\pi}$ & $0.0001$ \\
$\eta_{\epsilon}$ & $e^{-\frac{1}{4}}$\\
$\tau$ & $100$
\end{tabular}
\caption{Hyperparameters for Fitted Policy Iteration (FPI). Targets were updated using Equation~\ref{eq:vloss} after 100 critic steps. This was followed by another 500 critic steps. For setting~\ref{exp:c2}, we set the $\eta_{\pi}$ to 0.001. }
\end{table}

\end{document}